\newtheorem{thm}{Theorem}
\newtheorem{df}{Definition}
\newtheorem{cor}{Corollary}
\newcommand{\s}[0]{\textnormal{ }}
\newcommand{\eqva}[0]{\Leftrightarrow}
\begin{document}
\title{General characterisation of Hamiltonians generating velocity-independent forces}
\author{F Yip$^1$, A C H Cheung$^{2,3}$}
\address{$^1$ Trinity College, University of Cambridge,
Cambridge CB2~1TQ, UK}
\address{$^2$ Theory of Condensed Matter Group, Cavendish Laboratory, University of Cambridge, J.~J.~Thomson Avenue, Cambridge, CB3 0HE, UK}
\address{$^3$ The London Interdisciplinary School, 20-30 Whitechapel Road, London, E1 1EW, UK}
\eads{\mailto{fy276@cam.ac.uk}, \mailto{achc2@cam.ac.uk}}

\begin{abstract}
    Dynamics generated from Hamiltonians enjoy potential pathways to quantisation, but standard Hamiltonians are only capable of generating conservative forces. Classes of Hamiltonians have been proposed in Berry \emph{et al}\cite{Berry_2015}.~capable of generating non-conservative velocity-independent forces. Such Hamiltonians have been classified in the past, under the strict assumption that they are polynomial in momentum. This assumption is relaxed here to analyticity. In doing so, broader classes of Hamiltonians are discovered. 
    
    By considering the Hamiltonian as a function of state space without introducing the Lagrangian and constructing a metric-like tensor, we develop strong general constraints on Hamiltonians generating velocity-independent forces and exhibit a surprising dichotomy between classes of such Hamiltonians. These results are applicable to any spatial domain of any dimension admitting well-defined Hamiltonian dynamics. As an example application, we apply these constraints to classify all Hamiltonian velocity-independent forces in two spatial dimensions, as well as all such Hamiltonians which do not generate an isotropic simple harmonic motion. The case of one spatial dimension is also discussed for the sake of completeness. 
\end{abstract}
\noindent{\it hamiltonian characterisation, curl forces, hamiltonian dynamics, classical dynamics\/}

\submitto{\jpa}

\section{Introduction}

We consider the motion of a classical particle in $\mathbb{R}^n$ governed by a time-independent Hamiltonian \cite{Arnold}. Scaling its mass to unity, we identify force and acceleration via Newton's Second Law, instead of considering force as the time derivative of momentum. 

We investigate which Hamiltonians are capable of generating Newtonian forces of the form

\begin{equation} \label{intro Newton}
    \ddot{\mathbf{x}} = \mathbf{F}(\mathbf{x}), 
\end{equation}

which are functions of position $\mathbf{x}$ and independent of velocity $\mathbf{v}$. 

On one hand, standard Hamiltonians of the form 
\begin{equation}
    H = \frac{1}{2}\mathbf{p^{T}} \mathbf{p} + U(\mathbf{x})
\end{equation}
generate conservative forces \cite{Arnold} (where $\mathbf{p}$ denotes momentum). On the other hand, a fully general\footnote{This work considers time-independent Hamiltonians. The physical systems motivating this investigation are often inherently time-independent, being results of the time-averaging of fast, driven modes. The appendix offers an initial brief outlook into the \textit{time-dependent} setting. } Hamiltonian 
\begin{equation}
    H = H(\mathbf{x}, \mathbf{p})
\end{equation}
generates forces with arbitrary dependence on momentum $\mathbf{p}$, hence losing the interpretability as a model of the Newtonian dynamics (\ref{intro Newton}). 

These velocity-independent forces which are not conservative have been known as `curl forces' \cite{Berry_2012, Berry_2013, Berry_2015, Berry_2016, Berry_2020}, `follower forces' \cite{FollowerForce1, FollowerForce2, FollowerForce3} and `circulatory forces' \cite{CirculatoryForce1, CirculatoryForce2, CirculatoryForce3, CirculatoryForce4, CirculatoryForce5}. Whilst all fundamental forces are conservative, curl forces occur as effective forces in situations such as the dynamics of dipoles in optical electromagnetic fields \cite{optical1, optical2, optical3, optical4, optical5, optical6, optical7, optical8, Berry_2013}, whirling flexible shafts \cite{WhirlingShaftKimbal, WhirlingShaftSmithStoney} and pendulums with rotational dissipation \cite{Crandall1995}. 

The interest in modelling the Newtonian dynamics (\ref{intro Newton}) by a Hamiltonian arises from both the partial integrability \cite{Arnold} and more natural potential paths to quantisation \cite{Berry_2020} offered by Hamiltonian dynamics. 

In section \ref{highD}, we develop strong general results on Hamiltonians over any open and connected spatial domain generating velocity-independent forces, subject to physically reasonable regularity conditions described in subsection \ref{regularity}. 

We apply these results to the case of one and two spatial dimensions in section \ref{1D prob sec} and section \ref{2D prob sec}. A complete classification of such Hamiltonians in 1D and a partial classification of such Hamiltonians in 2D are obtained, in which new possibilities emerge beyond the previously studied case where the Hamiltonian is polynomial in momentum \cite{Berry_2015}. We note a curious connection to the Seesaw Mechanism \cite{SeesawMinkowski, SeesawMassMatrix1, SeesawTsutomu, SeesawMassMatrix2} in one case. We consequently obtain also a classification of all velocity-independent Hamiltonian forces in 2D. (In 1D all velocity-independent forces are conservative and hence Hamiltonian.) 

We discuss the relation between velocity-independent Hamiltonian forces and conservative forces in section \ref{pseudo conservative}, where we introduce the concepts of pseudo conservative and pseudo curl-free forces. 

\section{Assumptions and Notations}
\subsection{Regularity Assumptions} \label{regularity}
We consider Hamiltonians on the phase space $(\mathbf{x}, \mathbf{p})\in D\times \mathbb{R}^n$, where $D\subseteq \mathbb{R}^n$ is any open and connected subset. We do not assume that $D\times \mathbb{R}^n$ is closed under the evolution generated by the Hamiltonian. In other words, trajectories are allowed to leave $D$, at which point they terminate. 

All general results developed in section \ref{highD} will in fact be applicable to any (open and connected) spatial domain on which Hamiltonian dynamics is well-defined, such as (open and connected) subsets of a cylinder or a torus. 

As the spatial domain $D$ is not forced to be $\mathbb{R}^n$, the results are applicable to Hamiltonians with spatial singularities such as, but not limited to, a finite number of singularities, a lattice of singularities or a branch cut. Our motivation for allowing these singularities is to allow the generation of forces with (spatial) singularities. 

In our treatment here, we assume always, implicit in the term `Hamiltonian', that a bijection between momentum $\mathbf{p}$ and velocity $\mathbf{v}$ is generated at every $\mathbf{x}\in D$. 

We assume also that $H$ is analytic as a function of $(\mathbf{x}, \mathbf{p})$ and the bijection between $\mathbf{p}$ and $\mathbf{v}$ is an analytic diffeomorphism. This is physically reasonable as it is equivalent to the assumption that $H$ is an analytic function of $(\mathbf{x}, \mathbf{p})$ and the Lagrangian $L$ is an analytic function of $(\mathbf{x}, \mathbf{v})$, 

We refer to such Hamiltonians as \textit{regular}, all Hamiltonians are assumed to be regular unless otherwise stated. 

\subsection{Canonical Transformations} \label{canon trans sec}

We call a canonical transformation \cite{Arnold} \textit{permissible} if it preserves the velocity-independence of the force. The following two types of canonical transformations are permissible. When combined they generate the most general permissible canonical transformation. 

\begin{itemize}
    \item Type 1: 
    \begin{equation}
        \mathbf{x} \rightarrow N\mathbf{x} + \mathbf{b},\qquad\mathbf{p} \rightarrow (N^T)^{-1}\mathbf{p}
    \end{equation}
    for a fixed invertible matrix $N$ and vector $\mathbf{b}$. 
    \item Type 2: 
    \begin{equation}
        \mathbf{x} \rightarrow \mathbf{x},\qquad\mathbf{p} \rightarrow \mathbf{p} + \mathbf{V}(\mathbf{x})
    \end{equation}
    for any curl-free vector field $\mathbf{V}$ of $\mathbf{x}$. 
\end{itemize}

Hamiltonians related by a canonical transformation give rise to the same quantum system. As such, we shall only classify Hamiltonians up to these canonical transformations and make liberal use of them throughout. 

\subsection{Index Positions} \label{idx pos sec}

We shall adopt the index positions $x^i$ and $p_i$. The use of opposite index positions is justified by canonical transformations of type 1. (These shall be viewed as tensors over $GL_{\textnormal{dim}}(\mathbb{R})$, not $SO_{\textnormal{dim}}(\mathbb{R})$.)

Thus $\mathbf{x}$ derivatives should be of the form $\partial_i$ and $\mathbf{p}$ derivatives should be $\partial^i$. 

Due to the bijection between phase space ($\mathbf{x}, \mathbf{p}$) and state space ($\mathbf{x}, \mathbf{v}$), functions on phase space are equivalently functions of state space. Thus we can take $\mathbf{v}$ derivatives too. 

Specifically, we denote the $\mathbf{x}$ derivative whilst fixing $\mathbf{p}$ as $\partial'_i$, and the $\mathbf{x}$ derivative whilst fixing the velocity $\mathbf{v}$ as $\partial_i$. Note that if we are differentiating a function of $\mathbf{x}$ only, then these distinctions are not needed. 

Written in index notation with summation convention, the two types of permissible canonical transformations are: 

\begin{itemize}
    \item Type 1: 
    \begin{equation}
        x^i \rightarrow N^i_{\s j} x^j + b^i,\qquad p_i \rightarrow (N^{-1})^j_{\s i} p_j
    \end{equation} 
    for a fixed invertible matrix $N$ and vector $\mathbf{b}$. 
    \item Type 2: 
    \begin{equation}
        x_i \rightarrow x_i,\qquad p_i \rightarrow p_i + V_i(\mathbf{x})
    \end{equation}
    such that $\partial_iV_j$ is symmetric. 
\end{itemize}

By definition, these preserve Poisson brackets \cite{Arnold} and transform the force $F^i$ as a tensor. 

\subsection{Differential Operators}

We study now the effect of these canonical transformations on differential operators. 

Canonical transformations of type 1 transform $\partial_i$, $\partial'_i$ and $\partial^i$ as tensors. 

Canonical transformations of type 2 fix $\partial_i$ and $\partial^i$ but not $\partial'_i$. 

Such information is crucial, so that we know what we define later on will be meaningful and independent of these canonical transformations. 

\section{General Results} \label{highD}

Noting that due to the analytic diffeomorphism between phase space $(\mathbf{x}, \mathbf{p})$ and state space $(\mathbf{x}, \mathbf{v})$, any analytic function of one is also an analytic function of the other. Our discussion will largely involve translating the problem to the state space $(\mathbf{x}, \mathbf{v})$, where we solve the problem before translating back to phase space $(\mathbf{x}, \mathbf{p})$. 

We do not establish the Lagrangian, instead we simply write the Hamiltonian as a function of $(\mathbf{x}, \mathbf{v})$. Indeed, the bijection between phase and state space may be recovered from this via some of the properties we will prove. In doing so, we may recover the Hamiltonian as a function of phase space $(\mathbf{x}, \mathbf{p})$ and the dynamics. However, we will not need to do so explicitly here. 

As we are scaling the mass of the particle to $1$, the force is given by

\begin{equation}
    F^i = \dot{v^i} = \{\partial^i H, H\}, 
\end{equation}

where $\{f, g\} = \partial'_i (f) \partial^i (g) - \partial'_i (g) \partial^i (f)$ is the Poisson bracket \cite{Arnold}. Taking a momentum derivative gives

\begin{equation}
    \{\partial^i H, \partial^jH\} + \{\partial^{ij} H, H\} = 0
\end{equation}

by velocity-independence. 

Note that the first term is antisymmetric in $i, j$ (as the Poisson bracket is antisymmetric) and the second term is symmetric in $i, j$. Hence separating by symmetry, this is equivalent to

\begin{equation} \label{multiD fund1}
    \{\partial^i H, \partial^jH\} = 0,
\end{equation}

and

\begin{equation} \label{multiD fund2}
    \{\partial^{ij} H, H\} = 0.
\end{equation}

\subsection{Introducing \texorpdfstring{$g_{ij}$}{}}

We define $g^{ij} = \partial^{ij} H$. Note that this definition is invariant under canonical transformations of type 2 and hence meaningful. 

This is the Hessian for the $\mathbf{p} - \mathbf{v}$ bijection. Given that $H$ is an analytic function of $(\mathbf{x}, \mathbf{p})$, the $\mathbf{p} - \mathbf{v}$ bijection being an analytic diffeomorphism is equivalent to $g^{ij}$ being always non-singular. Let $g_{ij}$ denote the inverse of $g^{ij}$. 

As the Hessian, we have that
\begin{eqnarray}
    \partial^i = g^{ij}\partial_{v^j}, \\
    \partial_{v^i} = g_{ij}\partial^j. 
\end{eqnarray}

$g_{ij}$ plays a role similar to a spatial metric. Indeed for the standard Hamiltonian 
\begin{equation}
    H = \frac{1}{2}\mathbf{p^{T}} \mathbf{p} + U(\mathbf{x}), 
\end{equation}
$g_{ij}$ coincides with the standard spatial metric $\delta_{ij}$. 

\subsection{Coinciding Constant \texorpdfstring{$\mathbf{v}$}{} and \texorpdfstring{$\mathbf{p}$}{}}

We claim that for any choice of $\mathbf{v_0}$, we can perform a canonical transformation of type 2 so that $\mathbf{v} = \mathbf{v_0} \eqva \mathbf{\tilde{p}} = \mathbf{0}$, where $\mathbf{\tilde{p}}$ is the momentum after the canonical transformation. 

Indeed let $\mathbf{u} (\mathbf{x})$ denote the momentum originally for which $\mathbf{v} = \mathbf{v_0}$. 

Taking $\mathbf{p} \rightarrow \mathbf{p} - \mathbf{u} (\mathbf{x})$ works so long as we can prove that $\mathbf{u}$ is curl-free. 

In the original momentum, we have

\begin{equation}
    v_0^i = \partial^i H (\mathbf{x}, \mathbf{u} (\mathbf{x})). 
\end{equation}

Taking an $\mathbf{x}$ derivative at $(\mathbf{x}, \mathbf{u} (\mathbf{x}))$ gives

\begin{equation}
    0 = \partial'_{j}\partial^{i} H + g^{ik} (\partial_j u_k). 
\end{equation}

On the other hand, (\ref{multiD fund1}) gives 

\begin{equation} \label{multiD fund1 in g}
    g^{jk}\partial'_k\partial^i H \textnormal{ is symmetric}. 
\end{equation}

Thus we have that

\begin{equation}
    g^{ik} g^{jl} (\partial_j u_k) = -g^{jl}\partial'_{j}\partial^{i} H \textnormal{ is symmetric}. 
\end{equation}

As $g$ is invertible, we have that $\partial_j u_k$ is symmetric and $\mathbf{u}$ is curl-free. 

\subsection{5 Basic Properties of \texorpdfstring{$g_{ij}$}{}}

\subsubsection{Property 1}
We have

\begin{equation} \label{g basic 1}
    \eqalign{\partial_{v^i}g_{jk} &= g_{ia}\partial^a g_{jk}\\
    &= -g_{ia}g_{jb}g_{kc}\partial^a g^{bc}\\
    &= -g_{ia}g_{jb}g_{kc}\partial^{abc} H. }
\end{equation}

Thus $\partial_{v^i}g_{jk}$ is symmetric. 

\subsubsection{Property 2}
We have from (\ref{multiD fund2}) that $g^{ij}$ is preserved under time evolution, and hence so is $g_{ij}$. Thus we have 

\begin{equation} \label{g basic 2}
    \partial_i(g_{jk})v^i + \partial_{v^i}(g_{jk})F^i = 0. 
\end{equation}

\subsubsection{Property 3}
We have a pair of properties here. 

Applying $\partial_{v^i} = g_{ij}\partial^j$ to $H$ gives

\begin{equation} \label{g basic 3'}
    \partial_{v^i}H = g_{ij}v^j. 
\end{equation}

At any given point ($\mathbf{x}, \mathbf{v_0}$) in state space, we can perform the canonical transformation of type 2 to shift the momentum so that it is $\mathbf{0}$ whenever $\mathbf{v} = \mathbf{v_0}$. So at ($\mathbf{x}, \mathbf{v}$), $\partial_i$ coincides with $\partial'_i$. 

From $g_{ij}$ being the Hessian of the $\mathbf{v} - \mathbf{p}$ map and Hamilton's equation it follows that

\begin{equation} \label{g basic 3}
    \partial_i H = - g_{ij}F^j. 
\end{equation}

Note that both sides of the equation are invariant under canonical transformations of type 2, so the equation holds irrespective of the canonical transformation used to obtain it. 

\subsubsection{Property 4}
Still with $\mathbf{p}$ shifted as above, so $\mathbf{p} = 0$ coincides with $\mathbf{v} = \mathbf{v_0}$, we have by virtue of $g_{ij}$ being the Hessian that

\begin{equation}
    v^i = v_0^i + g^{ij}p_j + \Or(p^2)
\end{equation}

near ($\mathbf{x}, \mathbf{v_0}$). 

We apply (\ref{multiD fund1 in g}) near ($\mathbf{x}, \mathbf{v_0}$) but not necessarily at ($\mathbf{x}, \mathbf{v_0}$), thus we cannot identify $\partial'_i$ with $\partial_i$, giving 

\begin{equation} 
    \partial'_k\partial^i (H)g^{jk} = \partial'_k(v^i)g^{jk} = p_l\partial'_k(g^{il})g^{jk} + \Or(p^2) \textnormal{ is symmetric}
\end{equation}

for any choice of $\mathbf{p}$. 

Hence we have, now at ($\mathbf{x}, \mathbf{v_0}$), and thus identifying $\partial'_i$ with $\partial_i$ that

\begin{equation} 
    \partial_k(g^{il})g^{jk} \textnormal{ is symmetric in } i, j. 
\end{equation}

Thus we have

\begin{equation} 
    \partial_k(g_{IL})g^{iI}g^{lL}g^{jk} \textnormal{ is symmetric in } i, j, 
\end{equation}

and since $g^{ij}$ is invertible, we have

\begin{equation} \label{g basic 4}
    \partial_i(g_{jk}) \textnormal{ is symmetric}. 
\end{equation}

Again, this is invariant under canonical transformations of type 2, so this holds irrespective of the canonical transformation used to obtain it. 

\subsubsection{Property 5}

Taking the $\mathbf{x}$ derivative (fixing $\mathbf{v}$) of property 3 (\ref{g basic 3}) gives 

\begin{equation}
    \partial_{ij} H = - \partial_{j}(g_{ik})F^k - g_{ik}\partial_j(F^k). 
\end{equation}

Let $T_i^j = \partial_i(F^j)$. Noting that $\partial_{ij} H$ and $\partial_{j}(g_{ik})F^k$ are both symmetric, we have that

\begin{equation} \label{crucial eq}
    T^k_j(\mathbf{x})g_{ik}(\mathbf{x}, \mathbf{v}) \textnormal{ is symmetric}. 
\end{equation}

As made explicit with the dependencies of $g$ and $T$, what is remarkable about this is that $T$ is a function of $\mathbf{x}$ only, while $g$ is not in general. We will elevate this to even greater generality in corollary \ref{crucial force}. 

\subsection{A Crucial Property}

\begin{df}
    For $\mathbf{x}\in D$, let $\mathcal{A}(\mathbf{x})$ denote the affine subspace (of $(0, 2)$ tensors) generated by $g_{ij}(\mathbf{x}, \mathbf{v})$ as $\mathbf{v}$ ranges over $\mathbb{R}^n$. 
\end{df}

\begin{thm} \label{crucial}
    If $H$ is regular and generates a velocity-independent force, then $\mathcal{A}(\mathbf{x})$ is independent of $\mathbf{x}\in D$, and as such we will simply denote it as $\mathcal{A}$. (We do not need to assume that the force is not an isotropic SHM.)
\end{thm}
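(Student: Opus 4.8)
The plan is to reduce the statement to a pointwise claim about the \emph{direction space} of $\mathcal{A}(\mathbf{x})$ and then integrate it. Write $V(\mathbf{x})$ for the linear subspace of $(0,2)$ tensors spanned by the differences $g_{ij}(\mathbf{x},\mathbf{v})-g_{ij}(\mathbf{x},\mathbf{v}')$, so that $\mathcal{A}(\mathbf{x})$ is the translate of $V(\mathbf{x})$ through any base value $g_{ij}(\mathbf{x},\mathbf{v}_0)$. Since $H$ is analytic in $\mathbf{v}$, $V(\mathbf{x})$ is equally the span of all $\mathbf{v}$-derivatives $\partial_{v^{l_1}}\cdots\partial_{v^{l_m}}g_{ij}$ ($m\ge 1$) taken at a single base point. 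The target I aim for is the pointwise inclusion $\partial_i g_{jk}(\mathbf{x},\mathbf{v})\in V(\mathbf{x})$ for every $i$ and every $\mathbf{v}$. Granting this, every spatial derivative of a generator of $V(\mathbf{x})$ again lies in $V(\mathbf{x})$, so the Pfaffian system $\partial_i w_\alpha=c^{\beta}_{i\alpha}(\mathbf{x})w_\beta$ (summation over $\beta$) confines its solutions to $V(\mathbf{x}_0)$, forcing $V(\mathbf{x})$ to be locally constant; the base point then drifts within this fixed subspace because $\partial_i g_{jk}(\mathbf{x},\mathbf{v}_0)\in V(\mathbf{x})$, so $\mathcal{A}(\mathbf{x})$ is locally constant and hence constant on the connected domain $D$.

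The engine for the pointwise inclusion is the conservation law (\ref{g basic 2}). Rearranged, it reads $v^i\partial_i g_{jk}=-F^i\partial_{v^i}g_{jk}$, whose right-hand side is a combination of $\mathbf{v}$-derivatives of $g$ and therefore lies in $V(\mathbf{x})$ for every $\mathbf{v}$. I would then Taylor-expand this identity in $\mathbf{v}$ about the origin; because $V(\mathbf{x})$ is a fixed ($\mathbf{v}$-independent) subspace, every homogeneous component of a $V(\mathbf{x})$-valued analytic function again lies in $V(\mathbf{x})$. Reading off the degree-$(m+1)$ component shows that the total symmetrisation, over the contracted index $i$ together with the $m$ expansion indices, of the Taylor coefficient $\partial_{v^{l_1}}\cdots\partial_{v^{l_m}}\partial_i g_{jk}\big|_{0}$ belongs to $V(\mathbf{x})$.

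The main obstacle is that this symmetrised conclusion is, a priori, strictly weaker than the inclusion I want: the conservation law only ever controls the $\mathbf{v}$-directional derivative $v^i\partial_i g_{jk}$, never the individual $\partial_i g_{jk}$. The decisive observation is that the two independent symmetry properties close this gap. Property 4 (\ref{g basic 4}) makes $\partial_i g_{jk}$ fully symmetric in $(i,j,k)$, and this symmetry survives the $\mathbf{v}$-derivatives; while Property 1 (\ref{g basic 1}), iterated, makes $\partial_{v^{l_1}}\cdots\partial_{v^{l_m}}g_{jk}$ fully symmetric in $(l_1,\dots,l_m,j,k)$. Hence each coefficient $\partial_{v^{l_1}}\cdots\partial_{v^{l_m}}\partial_i g_{jk}\big|_{0}$ is simultaneously symmetric in $(i,j,k)$ and in $(l_1,\dots,l_m,j,k)$; the transpositions generated by these two overlapping blocks connect all $m+3$ indices, so the coefficient is in fact \emph{fully} symmetric. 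The symmetrisation appearing above is then vacuous, and each Taylor coefficient of $\partial_i g_{jk}(\mathbf{x},\mathbf{v})$ already lies in $V(\mathbf{x})$. Resumming the analytic series yields $\partial_i g_{jk}(\mathbf{x},\mathbf{v})\in V(\mathbf{x})$ for all $\mathbf{v}$, exactly the input needed for the integration step.

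I expect the genuinely substantive point to be this symmetry upgrade. The remaining ingredients are routine: the frame/Pfaffian argument promoting the pointwise inclusion to local constancy of $V(\mathbf{x})$, and the passage from local to global via connectedness of $D$. The one technical caveat worth flagging is that $\dim\mathcal{A}(\mathbf{x})$ need only be constant on an open dense subset of $D$ by analyticity, so I would verify that the frame argument extends across the (analytic, lower-dimensional) locus of possible dimension jumps before concluding global constancy on all of $D$.
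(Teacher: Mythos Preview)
Your core mechanism is sound and genuinely different from the paper's. Both proofs rest on the conservation law (\ref{g basic 2}) and the symmetry Properties 1 and 4, but you package them as a pointwise statement $\partial_i g_{jk}(\mathbf{x},\mathbf{v})\in V(\mathbf{x})$ obtained from a full-symmetry upgrade, then try to propagate this in $\mathbf{x}$ via a frame/Pfaffian argument. The paper instead fixes a base point $\mathbf{y}$ and a single affine functional $u^{ij}(\,\cdot\,-b_{ij})$ annihilating $\mathcal{A}(\mathbf{y})$, and shows by induction on $n$ that $u^{ij}\partial_{l_1\cdots l_n}g_{ij}$ vanishes at $\mathbf{y}$ for every $n$; analyticity in $\mathbf{x}$ then gives $\mathcal{A}(\mathbf{x})\subseteq\mathcal{A}(\mathbf{y})$ directly. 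In place of your Taylor-plus-full-symmetry extraction, the paper passes from $(f)_l v^l=0$ to $(f)_l=0$ by a ray-integration trick using only the symmetry $\partial_{v^k}(f)_l=\partial_{v^l}(f)_k$.

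The caveat you flag is a real gap rather than a cosmetic one. Your Pfaffian step needs a smooth local frame of $V(\mathbf{x})$, hence locally constant $\dim V$; on the open dense set where $\dim V=d_{\max}$ you get constancy of $V$, and by continuity $\mathcal{A}(\mathbf{x}^*)\subseteq\mathcal{A}_0$ at any limit point $\mathbf{x}^*$. But you never rule out a strict drop $\mathcal{A}(\mathbf{x}^*)\subsetneq\mathcal{A}_0$, so as written you do not obtain independence of $\mathcal{A}(\mathbf{x})$ on all of $D$. The paper's scheme avoids this entirely because it works with a fixed functional throughout. You can close the gap without abandoning your viewpoint: iterate your own argument to show by induction on $n$ that $\partial_{l_1\cdots l_n}g_{jk}(\mathbf{x},\mathbf{v})\in V(\mathbf{x})$ for every $n\ge 1$ (the Leibniz expansion of $\partial_{l_1\cdots l_{n-1}}$ applied to (\ref{g basic 2}) lands in $V(\mathbf{x})$ by the inductive hypothesis, and your full-symmetry extraction then removes the contracted $v^i$). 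This yields $u^{jk}\partial_{l_1\cdots l_n}g_{jk}(\mathbf{y},\mathbf{v})=0$ for any $u$ annihilating $V(\mathbf{y})$, whence analyticity in $\mathbf{x}$ gives $\mathcal{A}(\mathbf{x})\subseteq\mathcal{A}(\mathbf{y})$ with $\mathbf{y}$ arbitrary, and the frame argument becomes unnecessary.
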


\begin{proof}
    It suffices to prove that for any $\mathbf{x}, \mathbf{y}\in D$, $\mathcal{A}(\mathbf{x}) \subseteq \mathcal{A}(\mathbf{y})$. 

    For this it suffices to show that if $u^{ij}(g_{ij}(\mathbf{y}, \mathbf{v}) - b_{ij}) = 0$ for any $\mathbf{v}$, then $u^{ij}(g_{ij}(\mathbf{x}, \mathbf{v}) - b_{ij}) = 0$ for any $\mathbf{x}, \mathbf{v}$, where $u, b$ are fixed. 

    By analyticity we just need to show that at $\mathbf{y}$, we have 
    
    \begin{equation} \label{powerful diff form}
        u^{ij}\partial_{l_1\cdots l_n}(g_{ij} - b_{ij}) = 0
    \end{equation}
    
    for any $n\in\mathbb{Z}^{\geq 0}$, $l_1, \cdots, l_n$. 

    We prove this by strong induction on $n\in\mathbb{Z}^{\geq 0}$. 

    The base case $n = 0$ is true by assumption. 

    For the inductive case assume that (\ref{powerful diff form}) holds for $0, \cdots, n$ and we seek to prove that it holds for $n+1$, where $n\geq 0$. 

    Acting $\partial_{l_1\cdots l_n}$ on property 2 (\ref{g basic 2}) in the form of $\partial_{l}(g_{ij})v^{l} = - \partial_{v^{l}}(g_{ij})F^{l}$ gives

    \begin{equation}
        \partial_{l_1\cdots l_n l}(g_{ij} - b_{ij})v^{l} = - \partial_{l_1\cdots l_n}(\partial_{v^{l}}(g_{ij} - b_{ij})F^{l}),
    \end{equation}

    where we can add in the $b_{ij}$ as it is fixed and we are taking at least one derivative. Contracting with $u$ gives 
    
    \begin{equation}
        u^{ij}\partial_{l_1\cdots l_n l}(g_{ij} - b_{ij})v^{l} = - u^{ij}\partial_{l_1\cdots l_n}(\partial_{v^{l}}(g_{ij} - b_{ij})F^{l}). 
    \end{equation}

    Note that the right hand side vanishes by the inductive hypothesis as there are at most $n$ positional derivatives acting on $g_{ij} - b_{ij}$. Hence the left hand side vanishes too. 

    Let $(f_{l_1\cdots l_n})_l = u^{ij}\partial_{l_1\cdots l_n l}(g_{ij} - b_{ij})$, where the notation on the left hand side is used to emphasis the index $l$. 

    Thus we have 

    \begin{equation}
        (f_{l_1\cdots l_n})_lv^{l} = 0. 
    \end{equation}

    It follows from Properties 1 and 4 (\ref{g basic 1}, \ref{g basic 4}) that $\partial_{v^k l}g_{ij} = \partial_{v^i l}g_{jk}$ is symmetric in $k, l$, giving

    \begin{equation}
        \partial_{v^k}(f_{l_1\cdots l_n})_l \textnormal{ is symmetric in } k, l. 
    \end{equation}

    Hence we have 
    
    \begin{equation}
        \eqalign{&v^l \partial_{v^l}(v^\mu(f_{l_1\cdots l_n})_{l_{n+1}})\\
        =&v^\mu(f_{l_1\cdots l_n})_{l_{n+1}}+v^lv^\mu \partial_{v^l}((f_{l_1\cdots l_n})_{l_{n+1}})\\
        =&v^\mu(f_{l_1\cdots l_n})_{l_{n+1}}+v^lv^\mu \partial_{v^{l_{n+1}}}((f_{l_1\cdots l_n})_{l})\\
        =& v^\mu \partial_{v^{l_{n+1}}}(v^l(f_{l_1\cdots l_n})_l)\\
        =& 0. }
    \end{equation}

    Hence integrating along the ray from origin in $\mathbf{v}$ space we have that $v^\mu(f_{l_1\cdots l_n})_{l_{n+1}}$ at any point in $\mathbf{v}$ space is the same as at $\mathbf{v} = \mathbf{0}$. 
    
    Thus $v^\mu(f_{l_1\cdots l_n})_{l_{n+1}} = 0$ always and hence $(f_{l_1\cdots l_n})_{l_{n+1}} = 0$ always. This gives
    
    \begin{equation}
        u^{ij}\partial_{l_1\cdots l_nl_{n+1}}(g_{ij} - b_{ij}) = 0, 
    \end{equation}
    
    which finishes the induction. 
\end{proof}

We have two corollaries. Both still assume that $H$ is regular and generates a velocity-independent force. 

\begin{cor} \label{crucial aq}
    If at an $\mathbf{x_0}\in D$, $g_{ij}$ is constant as a function of $\mathbf{v}$, then $H$ is equivalent (via permissible canonical transformations) to an anisotropic quadratic Hamiltonian (see \cite{Berry_2015}). 
\end{cor}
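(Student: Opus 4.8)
The plan is to bootstrap the pointwise hypothesis into a global statement via Theorem \ref{crucial}, then read off the shape of $H$ from the constancy of its momentum-Hessian, and finally strip off the residual linear-in-momentum term with a type 2 canonical transformation. First I would note that the assumption that $g_{ij}$ is constant in $\mathbf{v}$ at $\mathbf{x_0}$ says exactly that the generating set $\{g_{ij}(\mathbf{x_0}, \mathbf{v}) : \mathbf{v}\in\mathbb{R}^n\}$ is a single point, so the affine hull $\mathcal{A}(\mathbf{x_0})$ is zero-dimensional. By Theorem \ref{crucial}, $\mathcal{A}(\mathbf{x}) = \mathcal{A}(\mathbf{x_0})$ for every $\mathbf{x}\in D$, so each $\mathcal{A}(\mathbf{x})$ is this same single point. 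Since $g_{ij}(\mathbf{x}, \mathbf{v})$ always lies in $\mathcal{A}(\mathbf{x})$, this pins $g_{ij}(\mathbf{x}, \mathbf{v}) = G_{ij}$ to one fixed constant tensor, for all $\mathbf{x}$ and all $\mathbf{v}$.

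Next, because $g^{ij} = \partial^{ij} H$ is the inverse of the now-constant $G_{ij}$, the momentum-Hessian $\partial^{ij}H = G^{ij}$ is itself constant; integrating twice in $\mathbf{p}$ then gives $H = \frac{1}{2}G^{ij}p_ip_j + A^i(\mathbf{x})p_i + U(\mathbf{x})$ with analytic coefficients $A^i, U$ on $D$. This step uses only $\mathbf{p}$-independence of the Hessian, not the full force equations.

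Finally I would eliminate the linear term by applying the construction of the subsection on coinciding constant $\mathbf{v}$ and $\mathbf{p}$ with $\mathbf{v_0} = \mathbf{0}$: the momentum realising $\mathbf{v} = \mathbf{0}$ is $u_j = -G_{jk}A^k$, which that argument shows to be curl-free, so $V_j := G_{jk}A^k$ is curl-free and the shift $p_i \to p_i + V_i$ is a permissible type 2 transformation. A short expansion shows this shift cancels the linear term, leaving $H = \frac{1}{2}G^{ij}p_ip_j + \tilde{U}(\mathbf{x})$: an anisotropic quadratic Hamiltonian in the sense of \cite{Berry_2015}. A further type 1 transformation may then be used to normalise the constant symmetric $G^{ij}$, at which stage its signature — and the link to the Seesaw Mechanism — becomes visible.

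I expect the first step to carry the real weight, since it is where the global rigidity furnished by Theorem \ref{crucial} is converted into a single constant tensor $G_{ij}$; everything afterwards is essentially formal. The only other point needing care is the curl-freeness of $V$, but this is exactly what the earlier coinciding-momentum argument supplies, so no genuinely new estimate is required. Alternatively one could establish it directly by imposing velocity-independence on the force computed from the quadratic $H$, which returns the symmetry of $G_{ik}\partial_j A^k$ — precisely relation (\ref{crucial eq}) specialised to $g = G$.
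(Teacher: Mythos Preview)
Your proposal is correct and follows essentially the same route as the paper: use Theorem \ref{crucial} to upgrade the pointwise constancy of $g_{ij}$ to global constancy, then integrate the constant momentum-Hessian and gauge away the linear term with a type 2 transformation. The paper merely reverses the order of the last two steps, performing the canonical transformation ($\mathbf{v}=\mathbf{0}\Leftrightarrow\mathbf{p}=\mathbf{0}$) \emph{before} integrating so that the linear-in-$\mathbf{p}$ term never appears; this is a cosmetic difference, not a substantive one.
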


\begin{proof}
    We can use a permissible canonical transformation to assume that $\mathbf{v} = 0$ iff $\mathbf{p} = 0$. 

    Let $g_{ij} = c_{ij}$ at $\mathbf{x_0}$. 
    
    By theorem \ref{crucial}, we have $\mathcal{A} = \mathcal{A}(\mathbf{x_0}) = \{c_{ij}\}$. 

    Thus $g_{ij} = c_{ij}$ always. Hence $g^{ij} = c^{ij}$ always, where $c^{ij}$ is the inverse of $c_{ij}$. 

    As $\partial^i H = 0$ at $\mathbf{p} = 0$, we must have $H(\mathbf{x}, \mathbf{p}) = \frac{1}{2}c^{ij}p_ip_j + H(\mathbf{x}, 0)$, an anisotropic quadratic Hamiltonian. 
\end{proof}

\begin{cor} \label{crucial force}
    $T^k_j(\mathbf{y})g_{ik}(\mathbf{x}, \mathbf{v})$ is symmetric for any $\mathbf{x}, \mathbf{y}\in D$, $\mathbf{v}\in \mathbb{R}^n$. 
\end{cor}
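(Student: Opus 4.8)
The plan is to exploit that the symmetry condition appearing in Property 5 is \emph{linear and homogeneous} in $g$, and to combine this with the $\mathbf{x}$-independence of $\mathcal{A}$ supplied by Theorem \ref{crucial}. Concretely, for a fixed matrix $T$ I would consider the map sending a $(0,2)$ tensor $h_{ik}$ to the antisymmetric part $T^k_j h_{ik} - T^k_i h_{jk}$ of $T^k_j h_{ik}$. This map is linear in $h$, so its kernel $K_T$ --- the set of tensors $h$ for which $T^k_j h_{ik}$ is symmetric in $i,j$ --- is a \emph{linear subspace} of the space of $(0,2)$ tensors.

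Fixing $\mathbf{y}\in D$, equation (\ref{crucial eq}) applied at the point $\mathbf{y}$ says precisely that $g_{ik}(\mathbf{y}, \mathbf{v}) \in K_{T(\mathbf{y})}$ for every $\mathbf{v}\in\mathbb{R}^n$. Since $K_{T(\mathbf{y})}$ is a linear (hence affine) subspace containing all the generators $g_{ij}(\mathbf{y}, \cdot)$ of $\mathcal{A}(\mathbf{y})$, it must contain their entire affine hull, so that $\mathcal{A}(\mathbf{y}) \subseteq K_{T(\mathbf{y})}$. In words, $T^k_j(\mathbf{y})$ contracts \emph{every} element of $\mathcal{A}(\mathbf{y})$ into a symmetric tensor, not merely the sampled values $g(\mathbf{y}, \mathbf{v})$.

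Theorem \ref{crucial} then furnishes the bridge between the two spatial points: $\mathcal{A}(\mathbf{x}) = \mathcal{A}(\mathbf{y}) = \mathcal{A}$ for all $\mathbf{x}, \mathbf{y}\in D$. In particular $g_{ik}(\mathbf{x}, \mathbf{v}) \in \mathcal{A} = \mathcal{A}(\mathbf{y}) \subseteq K_{T(\mathbf{y})}$ for every $\mathbf{x}$ and every $\mathbf{v}$, which is exactly the assertion that $T^k_j(\mathbf{y}) g_{ik}(\mathbf{x}, \mathbf{v})$ is symmetric in $i,j$, as required.

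I do not anticipate a serious obstacle here; once the right viewpoint is adopted the argument is essentially bookkeeping. The one point demanding care is the passage from ``symmetric on each generator $g(\mathbf{y}, \mathbf{v})$'' to ``symmetric on all of $\mathcal{A}(\mathbf{y})$'': this is legitimate only because the condition is \emph{linear and homogeneous} in $g$, so that it survives the affine (indeed linear) combinations that build up the hull. Had Property 5 delivered an inhomogeneous or nonlinear constraint, this step would break down, and it is precisely the homogeneity of (\ref{crucial eq}) together with the constancy of $\mathcal{A}$ from Theorem \ref{crucial} that makes the decoupling of the two spatial arguments possible.
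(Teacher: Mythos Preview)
Your proposal is correct and is essentially the paper's own proof, spelled out in more detail: the paper's one-line argument is exactly that Property~5 places each $g(\mathbf{y},\mathbf{v})$ in the linear subspace $K_{T(\mathbf{y})}$, so the affine hull $\mathcal{A}(\mathbf{y})$ sits inside $K_{T(\mathbf{y})}$, and then Theorem~\ref{crucial} identifies $\mathcal{A}(\mathbf{y})$ with $\mathcal{A}$, which contains every $g(\mathbf{x},\mathbf{v})$. Your careful remark that linearity (not merely affineness) of the symmetry condition is what allows passage to the affine hull is a valid point that the paper leaves implicit.
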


\begin{proof}
    We have $T^k_j(\mathbf{y})g_{ik}(\mathbf{y}, \mathbf{v})$ symmetric, hence for any $c_{ij}\in\mathcal{A}(\mathbf{y}) = \mathcal{A}$, $T^k_j(\mathbf{y})c_{ik}$ is symmetric. 
\end{proof}

Already we can see the special nature of the `anisotropic quadratic' Hamiltonians introduced in \cite{Berry_2015}. For these, $g$ is constant, allowing for the widest range of possibilities for $T$ and the dynamics. 

\section{The 1D Problem} \label{1D prob sec}
\begin{thm} \label{1D Hamiltonian}
    Let $H$ be a Hamiltonian defined on $(x, p)\in D\times \mathbb{R}$, where $D\subseteq\mathbb{R}$ is open and connected. If $H$ is regular (cf.~subsection \ref{regularity}, this is a technical assumption) and generates a velocity-independent force, then after a permissible canonical transformation (cf.~subsection \ref{canon trans sec}), either
    \begin{enumerate}
        \item there exists a function $f$ defined on the range of $H$, which is analytic on the interior of its domain, such that 
        \begin{equation}\label{1D global equation}
            \partial_{pp} H = f(H), 
        \end{equation}or
        \item $H$ is a function of $p$. (This generates a vanishing force.)
    \end{enumerate}
    And any regular Hamiltonian $H$ satisfying one of the above indeed generates a velocity-independent force. 
\end{thm}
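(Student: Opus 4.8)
The plan rests on one reduction: in one dimension, velocity-independence is \emph{equivalent} to $\{\partial_{pp}H, H\} = 0$. The identity $\{\partial^i H,\partial^j H\} + \{\partial^{ij}H, H\} = 0$ collapses at $i=j=1$ to $\{\partial_{pp}H, H\} = 0$, the antisymmetric bracket vanishing; conversely this bracket being zero means $\partial_p F = 0$. Hence $G := \partial_{pp}H$ is conserved alongside $H$. This already disposes of the converse half: under (1), $\{\partial_{pp}H, H\} = \{f(H), H\} = f'(H)\{H,H\} = 0$, and under (2), $\partial_{pp}H$ and $H$ are both functions of $p$ alone, so their bracket vanishes.

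For the classification I would first normalise. Applying the canonical transformation of the subsection on coinciding constant $\mathbf v$ and $\mathbf p$ with $\mathbf{v_0} = \mathbf 0$ gives $v = 0 \eqva p = 0$, so $\partial_p H(x,0) = 0$. Regularity forces $G \neq 0$; since $D\times\mathbb{R}$ is connected, $G$ has a constant sign, say $G > 0$ (the case $G<0$ is analogous). Then each fibre $H(x,\cdot)$ is strictly convex with minimum $U(x) := H(x,0)$ at $p=0$. The theorem's dichotomy is exactly whether $U$ is constant. If $U$ is constant, property 3 evaluated along $v=0$ gives $U'(x) = -G(x,0)^{-1}F(x) = 0$, so $F \equiv 0$; but then $\partial_x H|_v = -G^{-1}F \equiv 0$ everywhere, so $H$ is a function of $v$ alone, and the normalisation $v=0 \eqva p=0$ forces $v$, hence $H$, to depend on $p$ only. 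This is case (2).

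The substance is the case of non-constant $U$, where I must establish $G = f(H)$, case (1). Conservation of $G$ along the flow is the engine: a regular orbit at energy $E$ meets the fibre $p=0$ at turning points $(x_L,0)$ and $(x_R,0)$ with $U(x_L)=U(x_R)=E$, and constancy of $G$ on the orbit gives $G(x_L,0)=G(x_R,0)$. More generally, on the dense open set where $dH\neq 0$ the relation $\{G,H\}=0$ forces $dG\parallel dH$, so the implicit function theorem yields a local analytic $f$ with $G=f(H)$. I would then continue $f$ analytically and consider $\Phi := G - f(H)$: it is analytic and vanishes on an open set, so by the identity theorem for real-analytic functions it vanishes throughout the connected phase space --- \emph{provided} $f$ can be taken single-valued over the whole range of $H$.

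That proviso is the main obstacle. Single-valuedness within one potential well is automatic by the turning-point argument, but the level sets of $H$ may be disconnected (several wells) and branch at the critical points, where distinct wells supply \emph{a priori} different functional relations. The essential point --- where analyticity is indispensable --- is the gluing across a barrier maximum $x_m$: the analytic Morse lemma writes $U(x) = E_b - \rho(x)^2$ with $\rho$ analytic and $\rho'(x_m)\neq 0$, and requiring the analytic function $G(\cdot,0)$ to agree on the two sides forces the left- and right-well relations to coincide on an interval of energies, after which the identity theorem propagates the agreement. I expect this gluing, together with checking that the resulting $f$ is analytic on the interior of the range of $H$ (with the degenerate minimal energy as the sole boundary value), to be the most delicate part of the argument.
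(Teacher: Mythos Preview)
Your approach coincides with the paper's. Both reduce the problem to the single bracket condition $\{\partial_{pp}H,H\}=0$, dispose of the converse immediately, and split the forward direction by whether the force vanishes identically; your criterion ``$U$ constant'' is equivalent to the paper's ``$F\equiv 0$'' after the same normalisation $v=0\Leftrightarrow p=0$, via property~3. Your treatment of case~(ii) is essentially identical to the paper's: $H$ and $G$ are conserved, hence functions of $v$ alone, and the normalised $v$--$p$ bijection is position-independent.

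Where you diverge is in how much you worry about case~(i). The paper's argument there is brief: away from stationary points of $H$ one has $G=f(H)$ locally, stationary points are precisely where $v=0$ and $F=0$, and since $F$ does not vanish identically these are isolated, so one ``may proceed to extend across $v=0$ to the entire phase space''. The multi-well gluing problem you raise --- that disconnected level sets of $H$ might \emph{a priori} carry distinct local relations $f_L,f_R$ --- is not addressed explicitly in the paper; it is simply absorbed into the phrase ``analytic continuation''. Your Morse-lemma argument at a barrier maximum is a reasonable way to make this rigorous (note it needs a small patch at degenerate critical points of $U$, where $\rho'(x_m)=0$), but it is elaboration rather than a different route: the paper's implicit position is that once stationary points are isolated, connectedness of phase space and the identity theorem for real-analytic functions do the rest.
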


We now prove this theorem. $D$ must be an (potentially infinite) open interval in this case. As we are working in 1D, we shall simply use $x, p, v, F$ to denote position, momentum, velocity and force. We use $\partial_p$ to denote momentum derivatives. 

In this 1D problem, the velocity-independence of the force is equivalent to (\ref{multiD fund2}), which becomes

\begin{equation} \label{1D fund2}
    \{\partial_{pp} H, H\} = 0. 
\end{equation}

As (\ref{1D fund2}) holds for both cases in theorem \ref{1D Hamiltonian}, its last assertion, that a regular Hamiltonian satisfying either i or ii generates a velocity-independent force, follows. 

To prove that any regular Hamiltonian generating a velocity-independent force satisfies either i or ii, we split into two cases based on whether the force generated is (universally) vanishing. They corresponding to the cases i and ii of theorem \ref{1D Hamiltonian}, respectively. 

\subsection{Case i: Non-vanishing Force}

It follows from (\ref{1D fund2}) that locally, away from stationary points of $H$, we have

\begin{equation} \label{1D equation}
    \partial_{pp} H = f(H)
\end{equation}

for some analytic function $f$ of $H$. 

The only obstacles to analytic continuation are contours of $H$ consisting solely of stationary points of $H$. Note that stationary points of $H$ are exactly points where $v = 0$ and $F = 0$. Hence if the force is not universally vanishing, we may proceed to extend (\ref{1D equation}) across $v = 0$ to the entire phase space. This gives (\ref{1D global equation}) for a suitable choice of function $f$ whose domain is the range of $H$, which in addition is analytic in the interior of its domain, as required. 

Note that for $H$ to be regular, the Hessian $\partial_{pp} H$ (of the $p - v$ bijection) and hence $f(H)$ must be always positive or always negative. 

As (\ref{1D global equation}) is a second order differential equation of $H$ as a function of $p$ for each $x$, we have two integration constants, which are in this context functions of $x$. However, one of these is redundant as it can be eliminated by a canonical transformation. Indeed, we can apply a canonical transformation of type $2$ so that $v = 0$ and $p = 0$ coincide. This is the requirement that $\partial_p H = 0$ at $p = 0$. Thus after we determine a suitable $f$, the Hamiltonian is in general determined by one more function of space $x$. 

\subsubsection{Examples}

$f$ being constant gives the standard Hamiltonian

\begin{equation}
    H = \frac{f}{2}p^2 + U(x). 
\end{equation}

$f$ being the identity function gives 

\begin{equation}
    H = \cosh(p)U(x), 
\end{equation}

generating the force 

\begin{equation}
    F(x) = - U(x)U'(x). 
\end{equation} 

Note that in this example we must have $U(x)$ being always positive/negative for there to be an analytic diffeomorphism between $\mathbf{p}$ and $\mathbf{v}$. A specific instance of this example would be 

\begin{equation}
    H = \cosh(p)\sqrt{x^2 + a^2}
\end{equation} 

for any $a\neq 0$ generating the simple harmonic motion 

\begin{equation}
    F = -x, 
\end{equation}

where the $a\neq 0$ requirement ensures that $U(x)$ is always positive. 

\subsection{Case ii: Vanishing Force}

By the flow in state space $(x, v)$ generated by the dynamics, we have that the conserved quantities $H$ and $\partial_{pp} H$ are functions of $v$ when $v\neq 0$. By continuity we know that $H$ and $\partial_{pp} H$ are always functions of $v$ only. 

Using a canonical transformation of type 2, we assume that $p = 0$ when $v = 0$. As the Hessian of the $v - p$ map $\frac{1}{\partial_{pp} H}$ is a function of $v$ only, we have that $p$ is a function of $v$. That is, the $v - p$ bijection is independent of $x$. Thus $H$, which is a function of $v$, is a function of $p$. 

\subsubsection{Examples}

The standard Hamiltonian of a 1D free particle 

\begin{equation}
    H = \frac{1}{2}p^2
\end{equation}

is an example which also satisfies (\ref{1D global equation}). 

Though in general (\ref{1D global equation}) would not be satisfied in this case, such as for the Hamiltonian 

\begin{equation}
    H = \rme^p + p^2, 
\end{equation}

hence a separate discussion is indeed warranted. 

\section{The 2D Problem} \label{2D prob sec}

We will prove the following theorem for the 2D problem, completely classifying all possible Hamiltonians, apart from ones generating isotropic simple harmonic motion. Here an isotropic simple harmonic motion refers to a force given by $\mathbf{F} = a \mathbf{x} + \mathbf{b}$ for a fixed scalar $a$ and vector $\mathbf{b}$. 

To avoid confusion with squares of these quantities, here we will use $x, y, v_x, v_y, p_x, p_y, F_x, F_y$ to denote the components $x^1, x^2, v^1, v^2, p_1, p_2, F^1, F^2$ of position, velocity, momentum and force respectively. 

\begin{thm} \label{2D Hamiltonian}
    Let $H$ be a Hamiltonian defined on $(\mathbf{x}, \mathbf{p})\in D\times \mathbb{R}^2$, where $D\subseteq\mathbb{R}^2$ is open and connected. Let $H$ be regular (cf.~subsection \ref{regularity}, this is a technical assumption) and generate a velocity-independent force, which is not an isotropic simple harmonic motion. 
    
    Then $H$ is of one of the following three forms after a permissible canonical transformation (cf.~subsection \ref{canon trans sec}). 
    \begin{enumerate}
        \item `Anisotropic Quadratic' (introduced in \cite{Berry_2015}), 
        \begin{equation}
            H = \frac{1}{2} M^{ij} p_ip_j + U(\mathbf{x}), 
        \end{equation}
        where $M$ is a fixed invertible $2\times 2$ real matrix, and $U: D \rightarrow \mathbb{R}$ is analytic. 
        \item `Separable', 
        \begin{equation}
            H = H_1(x, p_x) + H_2(y, p_y), 
        \end{equation}
        where $H_1$ and $H_2$ are regular 1D Hamiltonians generating velocity-independent forces. (cf.~section \ref{1D prob sec})
        \item `Seesaw', here we have $2$ sub-cases: 
        \begin{enumerate}
            \item $H$ is of the form
            \begin{equation}
                \fl\eqalign{H = y \partial'_{x} \eta(x, p_y) + u(x) \partial'_{x} \eta(x, p_y) + p_x \partial_{p_y} \eta (x, p_y) + G(\eta(x, p_y))\cr
                - \partial_{p_y} \eta(x, p_y)  \int_0^{p_y} G'(\eta(x, p))\, \rmd p, }
            \end{equation}
            where $\eta$ is a regular Hamiltonian generating a somewhere non-vanishing velocity-independent force, and $G$ is an analytic function. $u$ is allowed to have singularities so long as $u(x) \partial'_{x} \eta(x, p_y)$ is analytic. 
            \item $H$ is of the form
            \begin{equation}
                H = \left(p_x + c(x) + d(x)\int^{p_y}\frac{1}{r(p)^2}\, \rmd p\right) r(p_y) + u(p_y), 
            \end{equation}
            where $r$ is an analytic diffeomorphism $\mathbb{R}\rightarrow\mathbb{R}$, and we take the Cauchy Principle Value of the integral when $r$ vanishes. $c$, $d$, $u$ are analytic. 
        \end{enumerate}
    \end{enumerate}
    And all the Hamiltonians $H$ listed above are indeed regular Hamiltonians generating a velocity-independent force. 
\end{thm}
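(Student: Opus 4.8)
The plan is to split the argument into the (easy) converse and the (substantive) forward classification, and dispatch the converse first. To show that each listed $H$ does generate a velocity-independent force, I would verify the two fundamental identities (\ref{multiD fund1}) and (\ref{multiD fund2}), i.e.\ that $\{\partial^iH,\partial^jH\}=0$ and that $g^{ij}=\partial^{ij}H$ is conserved. For the anisotropic quadratic and separable forms this is immediate; for the two seesaw forms it is a bookkeeping computation, which I would organise so that the linearity of $H$ in $y$ and in $p_x$ makes the momentum-dependence of $F^i=\{\partial^iH,H\}$ cancel. This direction is routine.

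For the forward direction the entire problem is controlled by the affine subspace $\mathcal{A}\subseteq\mathrm{Sym}_2(\mathbb{R})$ spanned by $g_{ij}(\mathbf{x},\mathbf{v})$, which by theorem \ref{crucial} is independent of $\mathbf{x}$. Since $\mathrm{Sym}_2(\mathbb{R})$ is three-dimensional, $\dim\mathcal{A}\in\{0,1,2,3\}$, and a type-1 canonical transformation acts on $\mathcal{A}$ by the congruence $c\mapsto M^{\top}cM$, $M\in GL_2(\mathbb{R})$, so I may normalise $\mathcal{A}$ up to this action, noting that $\mathcal{A}$ always contains invertible members. Two reductions are forced by the general results. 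If $\dim\mathcal{A}=0$ then $g$ is constant and corollary \ref{crucial aq} gives the anisotropic quadratic form (case i). If $\dim\mathcal{A}=3$, so $\mathcal{A}=\mathrm{Sym}_2(\mathbb{R})$, then corollary \ref{crucial force} forces the force gradient $T_i{}^{j}=\partial_iF^j$ to satisfy $cT$ symmetric for every symmetric $c$; taking $c=I$ gives $T=T^{\top}$ and then $T$ commutes with all symmetric matrices, so $T=\lambda I$, and integrability of the force makes $\lambda$ constant, i.e.\ an isotropic SHM, which is excluded. Hence the live cases have $\dim\mathcal{A}\in\{1,2\}$.

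The heart of the proof is to classify these intermediate cases. Here I would use the Lorentzian structure of $\mathrm{Sym}_2(\mathbb{R})$ under congruence: the discriminant $\det$ is a signature-$(1,2)$ quadratic form, so $\mathcal{A}$ is pinned down, via the classical simultaneous-congruence normal form of the pencil of symmetric forms generating it, by its base point together with the restriction of $\det$ to its direction space. I expect exactly two surviving normal forms. In the ``diagonalisable'' configuration, where $\mathcal{A}$ can be brought inside $\{c_{12}=0\}$ so that $g$ is simultaneously diagonal in a fixed frame, corollary \ref{crucial force} forces $T$ diagonal, i.e.\ $\partial_yF_x=\partial_xF_y=0$, and reconstructing $H$ yields the separable form (case ii). In the degenerate configuration, where $\mathcal{A}$ lies in the null plane $\{c_{22}=0\}$, one gets $g_{22}\equiv0$, equivalently $g^{11}=\partial_{p_xp_x}H\equiv0$, so $H$ is linear in $p_x$; this produces the seesaw forms (case iii), with subcases (a) and (b) separating according to whether the reconstructed generating data genuinely depends on $y$ or collapses to the $r(p_y)$ normal form. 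In each case the explicit $H$ is recovered by integrating properties 1--4, which state that $g_{ij}$ is the $\mathbf{v}$- and $\mathbf{x}$-``gradient'' of lower-order data, together with property 5 and (\ref{g basic 3}) and the $\mathbf{p}$--$\mathbf{v}$ bijection, after fixing residual freedom by a type-2 transformation so that $\mathbf{v}=\mathbf{0}\Leftrightarrow\mathbf{p}=\mathbf{0}$.

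The main obstacle is exactly the exhaustiveness of this intermediate analysis. Two points need real care. First, one must rule out the remaining congruence type with a \emph{definite} two-dimensional direction, where corollary \ref{crucial force} only forces $T=pI+qJ$ with $J$ the standard complex structure, i.e.\ a ``holomorphic'' force rather than an SHM; I expect that imposing global invertibility of $g$ over all $\mathbf{v}\in\mathbb{R}^2$ against the harmonicity of the potential whose $\mathbf{v}$-Hessian is $g$ (forced by property 1) collapses this branch into an SHM or into a listed form, but this requires a genuine global argument and is where I anticipate the most work. Second, the reconstruction must be globalised over the possibly singular domain $D$, tracking where $u$, $r$, or the integrals in the seesaw forms develop singularities and justifying the Cauchy principal values; the analytic-continuation step needed to pass from the local normal forms to a global statement, in the spirit of theorem \ref{crucial}, is the delicate technical ingredient.
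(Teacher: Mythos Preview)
Your plan is sound but organised dually to the paper's. The paper does \emph{not} split by $\dim\mathcal{A}$ or by the congruence class of the pencil $\mathcal{A}$; it splits by the spectral type of the $(1,1)$-tensor $T_i{}^j=\partial_iF^j$ at a well-chosen point $\mathbf{y}\in D$: (1) complex conjugate eigenvalues somewhere, (2) distinct real eigenvalues somewhere, (3) a single eigenvalue everywhere but $T$ non-isotropic somewhere, (4) $T$ isotropic everywhere. Corollary~\ref{crucial force} is then used to constrain $g$ from $T(\mathbf{y})$, whereas you run the same corollary in the opposite direction, constraining $T$ from $\mathcal{A}$. The paper's route is a little more direct because a $(1,1)$-tensor has intrinsic eigenvalues, so the four-way split needs no normal-form theory for pencils; your route requires the simultaneous-congruence classification of $\mathcal{A}$ and separate handling of $\dim\mathcal{A}=1$ versus $2$, but it has the pleasant side-effect that some of the paper's internal sub-branches (e.g.\ ``$T$ fails to be diagonal at some other $\mathbf{z}$'' inside Case~2, which the paper must chase back to anisotropic quadratic by hand) are absorbed automatically once $\mathcal{A}$ is fixed and $\dim\mathcal{A}\ge 1$. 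Your identification of the hard step is exactly right: your ``definite'' branch is the paper's Case~1, and it is dispatched precisely by the global argument you anticipate---in the complex eigenbasis the $\mathbf{p}\leftrightarrow\mathbf{v}$ bijection becomes a holomorphic bijection $\mathbb{C}\to\mathbb{C}$, hence affine, forcing $g$ constant in $\mathbf{v}$ and invoking corollary~\ref{crucial aq}. One small correction: the seesaw (a)/(b) bifurcation in the paper is not governed by $y$-dependence of the data but by whether the $x$-component of the force, $F_x=a(x)$, vanishes identically---this mirrors the 1D dichotomy of theorem~\ref{1D Hamiltonian} and is what controls whether the characteristics of (\ref{h before characteristics}) cross $v_x=0$ so that the solution can be analytically continued.
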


Unlike the first two cases, in the third case (Seesaw), the Hamiltonian is always unbounded above and below (even at each fixed $\mathbf{x}$), thus may not be physical. The split into two sub-cases for the Seesaw Hamiltonian is similar to that in the 1D problem. 

A note here on the three types of Hamiltonians: 
\begin{enumerate}
    \item `Anisotropic quadratic' is the name given in \cite{Berry_2015} to these Hamiltonians with a not-necessarily-isotropic quadratic kinetic term $\frac{1}{2} M^{ij} p_ip_j$. 
    \item A `separable' Hamiltonian is one where the motion decouples both classically and quantum mechanically. 
    \item The name `Seesaw' pays homage to the Seesaw Mechanism \cite{SeesawMinkowski, SeesawTsutomu} for neutrino mass, to which the Hamiltonian and dynamics of the third case bear resemblance in two aspects to be discussed in the proof. 
\end{enumerate}

We will obtain also the characterisation of velocity-independent Hamiltonian forces, practically as a corollary. 

\begin{thm} \label{force thm}
    If a velocity-independent force $F$ is generated by a regular Hamiltonian $H$ defined on $(\mathbf{x}, \mathbf{p})\in D\times \mathbb{R}^2$, then $F$ can be generated by an anisotropic quadratic Hamiltonian $\tilde{H} = \frac{1}{2} M^{ij} p_ip_j + U(\mathbf{x})$, where $M$ is a fixed invertible $2\times 2$ real matrix, and $U: D\rightarrow \mathbb{R}$ is analytic. 
\end{thm}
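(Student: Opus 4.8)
The plan is to recast the claim as a statement about when the force is a constant-metric gradient. An anisotropic quadratic Hamiltonian $\tilde{H} = \frac{1}{2}M^{ij}p_ip_j + U(\mathbf{x})$, with $M$ constant, symmetric and invertible, generates through Hamilton's equations the force $F^i = -M^{ij}\partial_j U$. Writing $c = M^{-1}$, producing such an $\tilde{H}$ is therefore equivalent to exhibiting a constant, symmetric, invertible $(0,2)$ tensor $c_{ij}$ together with a function $U : D \to \mathbb{R}$ satisfying $\partial_i U = -c_{ij}F^j$. So I must find a constant $c$ for which the $1$-form $c_{ij}F^j\,\rmd x^i$ is exact, and then take $U$ to be minus a potential for it.

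The first half of this is supplied directly by the general theory. Any single value $c_{ij} := g_{ij}(\mathbf{x}_0,\mathbf{v}_0)$ is constant, symmetric (being the inverse of the symmetric Hessian $g^{ij} = \partial^{ij}H$), invertible by regularity, and lies in $\mathcal{A}$. Corollary \ref{crucial force} then gives that $T^k_j c_{ik}$ is symmetric, which is exactly the closedness condition $c_{ik}T^k_j = c_{jk}T^k_i$, i.e. $\partial_j(c_{ik}F^k) = \partial_i(c_{jk}F^k)$, for the $1$-form $c_{ij}F^j\,\rmd x^i$. Hence this form is always locally exact, and the theorem follows at once whenever $D$ is simply connected, taking $M = c^{-1}$ and $U$ obtained by integration.

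What remains — and what I expect to be the main obstacle — is global existence of $U$ on a merely connected $D$, where the closed form $c_{ij}F^j\,\rmd x^i$ could a priori carry non-vanishing periods $c_{ij}\oint_\gamma F^j\,\rmd x^i$; note that the exactness of each $g_{ij}(\mathbf{x},\mathbf{v})F^j\,\rmd x^i = -\rmd H(\cdot,\mathbf{v})$ does not by itself annihilate these, since $g$ varies along $\gamma$ while $c$ does not. To close this gap I would invoke the full classification of Theorem \ref{2D Hamiltonian} and dispose of the cases in turn. If $F$ is an isotropic simple harmonic motion $\mathbf{F} = a\mathbf{x} + \mathbf{b}$ it is conservative, so $\tilde{H} = \frac{1}{2}|\mathbf{p}|^2 + U$ works with $M = I$. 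Otherwise $H$ is anisotropic quadratic (nothing to prove), separable (so $F = (F_x(x), F_y(y))$ with each component a $1$D and hence conservative force, whence $F$ is a genuine gradient and again $M = I$ suffices), or Seesaw. The Seesaw case is the real work: I would compute $F$ explicitly from both sub-forms of Theorem \ref{2D Hamiltonian} and verify that the resulting force is \emph{pseudo conservative} in the sense above, i.e. that $c_{ij}F^j\,\rmd x^i$ is not merely closed but exact on $D$, thereby constructing the required $U$ and $M$. Establishing this exactness — equivalently, that the period obstruction vanishes — in the Seesaw case is where I expect the calculation to be most delicate.
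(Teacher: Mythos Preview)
Your approach is essentially the paper's: both run the case analysis underlying Theorem~\ref{2D Hamiltonian} and check in each branch that the force admits an anisotropic quadratic Hamiltonian. The one point of divergence is that you expect the Seesaw case to be the delicate one, requiring computation from both sub-forms and a period argument; in fact the paper disposes of it \emph{before} the sub-case split. Once $T$ is put in Jordan form at one point and one deduces $T_2^{\s 1} = 0$ everywhere, the force is already pinned down as $F_x = a(x)$, $F_y = b(x) + a'(x)y$, and the explicit $\tilde{H} = p_xp_y - y\,a(x) - \int b(x)$ generates it with $M = \left(\begin{smallmatrix}0&1\\1&0\end{smallmatrix}\right)$; no separate period analysis is needed. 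Your opening argument via Corollary~\ref{crucial force} (closedness of $c_{ij}F^j\,\rmd x^i$ for any $c\in\mathcal{A}$) is correct and is essentially what the paper records in its section on pseudo conservative forces as the general-dimension statement, but for the 2D theorem it is superfluous: the case analysis already produces an explicit $U$ in every branch.
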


Note that as a canonical transformation modifies the force by a linear transformation ($F^i$ transforms as a vector), and the ability to be generated by an anisotropic quadratic Hamiltonian is invariant under linear transformations, it suffices to prove this after any suitable canonical transformations are made. 

We shall now prove both theorems as well as giving examples of separable and Seesaw Hamiltonians by a case discussion using the general results established in section \ref{highD}. 

Note that $T$ is a $(1, 1)$ tensor, hence one could examine its eigenvalues. Indeed we shall do so here. One sees that the techniques developed can be applied to higher dimensions, though a more careful analysis is needed due to more possible Jordan Normal Forms. 

We split into 4 cases based on the eigen decomposition of $T$. 

\begin{enumerate}
    \item \emph{Somewhere} $T$ has a pair of complex (non-real) conjugate eigenvalues
    \item \emph{Somewhere} $T$ has distinct real eigenvalues
    \item \emph{Everywhere} $T$ has only one (real) eigenvalue, and \emph{somewhere} $T$ is not isotropic
    \item \emph{Everywhere} $T$ is isotropic
\end{enumerate}

The generic case in case 1 will lead to a contradiction. The remaining special case gives an anisotropic quadratic Hamiltonian. 

Case 2 leads to a separable Hamiltonian or an anisotropic quadratic Hamiltonian. 

Case 3 leads to a Seesaw Hamiltonian or an anisotropic quadratic Hamiltonian. 

Case 4 produces an isotropic simple harmonic motion. 

\subsection{Distinct Eigenvalues}

We consider right/lower eigenvectors, that is we consider the eigenvector $e_i$ with eigenvalue $\lambda$ when $\lambda e_i = T_i^je_j$. 

What we discuss here is common to cases 1 and 2. We assume that somewhere, say at $\mathbf{y}$, $T$ has distinct eigenvalues $\lambda^\alpha$ with eigenvectors $(e^\alpha)_i$, which we now fix. Note the lack of implicit summation over eigen-indices; they are not to be viewed as tensorial. 

As the two eigenvectors form a basis, we can expand 

\begin{equation}
    g_{ij} = \sum_{\alpha\beta}A^{\alpha\beta}(e^\alpha)_i(e^\beta)_j, 
\end{equation}

where $A^{\alpha\beta}$ is symmetric and potentially complex if the eigenvalues are. 

Now corollary \ref{crucial force} gives that 

\begin{equation}
    \eqalign{T^k_j(\mathbf{y})g_{ik}(\mathbf{x}, \mathbf{v}) &= \sum_{\alpha\beta}T^k_j(\mathbf{y})A^{\alpha\beta}(\mathbf{x}, \mathbf{v})(e^\alpha)_i(e^\beta)_k \cr
    &= \sum_{\alpha\beta}\lambda^\beta A^{\alpha\beta}(\mathbf{x}, \mathbf{v})(e^\alpha)_i(e^\beta)_j}
\end{equation}
is symmetric in $i, j$. Thus $\lambda^\beta A^{\alpha\beta}$ is symmetric in $\alpha, \beta$ and $A$ must be diagonal. Therefore

\begin{equation}
    g_{ij} = \sum_{\alpha}A^{\alpha\alpha}(e^\alpha)_i(e^\alpha)_j. 
\end{equation}

Property 1 (\ref{g basic 1}) and property 4 (\ref{g basic 4}) combined with this give
\numparts
\begin{eqnarray}
    \nabla_{\mathbf{v}}A^{\alpha\alpha} \parallel (\mathbf{e}^\alpha), \\
    \nabla_{\mathbf{x}}A^{\alpha\alpha} \parallel (\mathbf{e}^\alpha)
\end{eqnarray}
\endnumparts

respectively. 

\subsection{Case 1}
Case 1: Somewhere, say at $y\in D$, $T$ has a pair of complex (non-real) conjugate eigenvalues. 

Let 

\begin{equation}
    p_i = \sum_{\alpha}B^\alpha (e^\alpha)_i. 
\end{equation}

As $g_{ij}$ is the Hessian of the $\mathbf{v} - \mathbf{p}$ map, we have

\begin{equation}
    \rmd p_i = g_{ij}\rmd v^j. 
\end{equation}

Noting that $(e^\alpha)_i$ is fixed, it follows that

\begin{equation}
    \sum_\alpha (e^\alpha)_i \rmd B^\alpha = \sum_\alpha A^{\alpha\alpha}(e^\alpha)_i\rmd((e^\alpha)_jv^j). 
\end{equation}

Hence we have 

\begin{equation}
    \rmd B^\alpha = A^{\alpha\alpha}\rmd((e^\alpha)_jv^j). 
\end{equation}

Thus $B^\alpha$ is a holomorphic function of $v^i(e^\alpha)_i$ with derivative $A^{\alpha\alpha}$. 

At each fixed position $\mathbf{x}$ and for each choice of $\alpha$ we have bijections: 
\begin{enumerate}
    \item $\mathbb{C}\rightarrow\mathbb{R}^2$, $B^\alpha\mapsto \mathbf{p}$
    \item $\mathbb{R}^2\rightarrow\mathbb{R}^2$, $\mathbf{p}\mapsto \mathbf{v}$
    \item $\mathbb{R}^2\rightarrow\mathbb{C}$, $\mathbf{v}\mapsto v^i(e^\alpha)_i$
\end{enumerate}

Thus $\mathbb{C}\rightarrow\mathbb{C}$, $v^i(e^\alpha)_i\mapsto B^\alpha$ is a holomorphic bijection, and therefore linear, hence $g_{ij}$ is constant as a function of $\mathbf{v}$ at $\mathbf{x}$, hence by corollary \ref{crucial aq}, $H$ is equivalent to an anisotropic quadratic Hamiltonian. 

\subsection{Case 2}
Case 2: Somewhere, say at $y\in D$, $T$ has distinct real eigenvalues. 

By a change of basis/canonical transformation of type 1 we may assume that the (necessarily real) eigenvectors are 

\begin{equation}
    \mathbf{(e^1)} = \left(\matrix{
    1 \cr
    0
}\right), \qquad\mathbf{(e^2)} = \left(\matrix{
    0 \cr
    1
}\right). 
\end{equation}

Thus in this frame, $T$ is diagonal. 

From here onwards, we stop respecting the tensorial nature, and as such we suspend implicit summation. As proven before, we may assume by a permissible canonical transformation that $\mathbf{v} = \mathbf{0}$ iff $\mathbf{p} = \mathbf{0}$. 

From the earlier discussion common to case 1 and 2, we have that $g$ is diagonal and $g_{\alpha\alpha}$ is a function of $x^\alpha$ and $v^\alpha$. As

\begin{equation}
    \rmd p_\alpha = g_{\alpha\alpha}\rmd v^\alpha
\end{equation}

and $p_\alpha = 0$ when $v^\alpha = 0$, we have that $p_\alpha$ is a function of $x^\alpha$ and $v^\alpha$ only. 

Hence for each choice of $\alpha = 1, 2$, we have a bijection between $p_\alpha$ and $v^\alpha$ which is determined by $x^\alpha$. Thus $g_{\alpha\alpha}$ is a function of $x^\alpha$ and $p_\alpha$ and so is $g^{\alpha\alpha} = \frac{1}{g_{\alpha\alpha}}$. 

We have 

\begin{equation}
    H(\mathbf{x}, \mathbf{p}) = \sum_\alpha I_\alpha(x^\alpha, p_\alpha) + H(\mathbf{x}, \mathbf{0}), 
\end{equation}

where $\partial_{p_\alpha p_\alpha}I_\alpha = g^{\alpha\alpha}$ and $I_\alpha(x^\alpha, 0) = \partial_{p_\alpha}I_\alpha(x^\alpha, 0) = 0$. 

If $T$ is not always diagonal, then at some $\mathbf{z}$, we have $T_1^2(\mathbf{z})$ and $T_2^1(\mathbf{z})$ are not both $0$. Corollary \ref{crucial force} implies that 

\begin{equation}
    T_1^2(\mathbf{z})g_{22} = T_2^1(\mathbf{z})g_{11}. 
\end{equation} 

As $g$ is never singular, both $g_{11}$ and $g_{22}$ are never $0$, hence both $T_1^2(\mathbf{z})$ and $T_2^1(\mathbf{z})$ are not $0$ and therefore $g^{11}$ and $g^{22}$ are proportional. But as they are functions of different variables, this implies that they are both constant and we have an anisotropic quadratic Hamiltonian. Thus we assume now that $T$ is always diagonal. In other words, $F^\alpha$ is a function of only $x^\alpha$. 

Applying property 3 (\ref{g basic 3}) at $\mathbf{p} = \mathbf{0}$ which coincides with $\mathbf{v} = \mathbf{0}$, we have 

\begin{equation}
    \partial_{\alpha} H(\mathbf{x}, \mathbf{0}) = - g_{\alpha\alpha}F^\alpha
\end{equation}

is a function of $x^\alpha$. Hence we may separate 

\begin{equation}
    H(\mathbf{x}, \mathbf{0}) = \sum_{\alpha}u_\alpha(x^\alpha). 
\end{equation} 

Letting 

\begin{equation}
    H_\alpha(x^\alpha, p_\alpha) = I_\alpha(x^\alpha, p_\alpha) + u_\alpha(x^\alpha), 
\end{equation}

we have 

\begin{equation}
    H = \sum_\alpha H_\alpha(x^\alpha, p_\alpha) = H_1(x, p_x) + H_2(y, p_y)
\end{equation} 

as required and we can see that $H_\alpha$ must be regular 1D Hamiltonians generating velocity-independent forces. Hence we have proven that $H$ is separable. As any separable force is conservative, it may be generated by an anisotropic quadratic Hamiltonian. 

\subsubsection{An Example}

One may consider the separable Hamiltonian

\begin{equation}
    H = \frac{1}{2}p_x^2 + \frac{1}{2}x^2 + \cosh(p_y)(y^2 + 1)
\end{equation}

generating the force 

\begin{equation}
    \mathbf{F} = \left(\matrix{
        -x \cr
        -2y^3 - 2y
    }\right). 
\end{equation}

\subsection{Case 3}

Case 3: Everywhere $T$ has only one (real) eigenvalue, and somewhere, say at $\mathbf{y}$, $T$ is not isotropic. 

Again we stop respecting the tensorial nature from here onwards and suspend implicit summation. By a change of basis/canonical transformation of type 1, put $T$ into Jordan Normal Form at $\mathbf{y}$, so $T_2^1 = 0$, $T_1^2 \neq 0$. This gives that $g_{22} = 0$ always. 

If it is not the case that $T_2^1 = 0$ always, then we must have $g_{11} = 0$ too, which gives by Properties 1 and 4 (\ref{g basic 1}, \ref{g basic 4}) that $g_{12}$ is a constant and hence $H$ is equivalent to an anisotropic quadratic Hamiltonian by corollary \ref{crucial aq}. Thus we assume that $T_2^1 = 0$ always, which gives that 
\begin{eqnarray}
    F_x = a(x)\\
    F_y = b(x) + a'(x) y
\end{eqnarray}

for some functions $a, b$ of $x$. Note that $F_x$ is a function of $x$ only. For an analytic function $u: \mathbb{R}\rightarrow\mathbb{R}$, we denote by $\int u$ any fixed anti-derivative of $u$. For the sake of the classification of forces (theorem \ref{force thm}), the force may be generated by the anisotropic quadratic Hamiltonian 

\begin{equation}
    H = p_xp_y - y a(x) - \int b (x). 
\end{equation}

Properties 1 and 4 (\ref{g basic 1}, \ref{g basic 4}) give that 
\begin{eqnarray}
    g_{12} = h(x, v_x), \\
    g_{11} = f(x, v_x) + y \partial_x h + v_y \partial_{v_x} h
\end{eqnarray} 

for some functions $h, f$ of $x, v_x$. Note that $g_{12}$ is a function of $x, v_x$ only. 

Property 2 (\ref{g basic 2}) gives

\begin{equation} \label{h before characteristics}
    v_x\partial_x h + a(x)\partial_{v_x} h = 0, 
\end{equation}
\begin{equation} \label{f before characteristics}
    v_x\partial_x f + a(x)\partial_{v_x} f + b(x)\partial_{v_x} h = 0. 
\end{equation}

Similar to the 1D problem (section \ref{1D prob sec}), we further split into two subcases based on whether the force $F_x = a$ vanishes. 

\subsubsection{Sub-case i}

Here we assume that $a$ doesn't universally vanish. 

Solving the differential equations (\ref{h before characteristics}) and (\ref{f before characteristics}) by method of characteristics, we have 

\begin{equation} \label{h after characteristics}
    h = \psi\left(\frac{1}{2}v_x^2 - I(x)\right), 
\end{equation}
\begin{equation} \label{f after characteristics}
    f = - J(x) \psi'\left(\frac{1}{2}v_x^2 - I(x)\right) + \theta\left(\frac{1}{2}v_x^2 - I(x)\right). 
\end{equation}

where 

\begin{equation}
    I = \int a, \qquad J = \int b. 
\end{equation}

A technical note here: we need the fact that $a$ doesn't universally vanish, or that $I$ is not a constant function, so that characteristics would cross $v_x = 0$ in some regions, from which we can extend by analyticity. This is similar to the treatment in section \ref{1D prob sec} for the 1D problem. As we shall see below in sub-case ii, without this assumption, $h$, $f$ don't have to be symmetric under 

\begin{equation}
    v_x \mapsto -v_x. 
\end{equation}

Using property 3 (\ref{g basic 3'}, \ref{g basic 3}), up to an additive constant, we have

\begin{equation}
    H = (v_xv_y - J(x) - y a(x)) \psi\left(\frac{1}{2}v_x^2 - I(x)\right) + \phi\left(\frac{1}{2}v_x^2 - I(x)\right), 
\end{equation}

where 

\begin{equation}
    \phi = \int \theta. 
\end{equation}

It remains to convert this from state space ($\mathbf{x}, \mathbf{v}$) into phase space ($\mathbf{x}, \mathbf{p}$). The relation between $x$, $p_y$, $v_x$ are exactly those of position, momentum and velocity in a 1D Hamiltonian generating the velocity-independent force $a(x)$. In fact this 1D Hamiltonian is given by $\eta = \int \psi$ as a function of $x$ and $p_y$. As $\psi\left(\frac{1}{2}v_x^2 - I(x)\right) = g_{12}$ is always positive or always negative, $\eta$ is a strictly monotonic function of $\frac{1}{2}v_x^2 - I(x)$. Hence we can write 

\begin{equation}
    \phi = G(\eta)
\end{equation} 

for some function $G$. Converting H to a function of momentum we have that

\begin{equation} \label{type a seesaw}
    \fl\eqalign{H = y \partial_x \eta(x, p_y) + \frac{J(x)}{a(x)} \partial_x \eta(x, p_y) + p_x \partial_{p_y} \eta (x, p_y) + G(\eta(x, p_y)) \cr
    - \partial_{p_y} \eta(x, p_y)  \int_0^{p_y} G'(\eta(x, p))\, \rmd p. }
\end{equation}

Hence with 

\begin{equation}
    u(x) = \frac{J(x)}{a(x)}, 
\end{equation}

$H$ is a type (a) Seesaw Hamiltonian. Note that intriguingly, the two terms involving $G$ do not contribute to the dynamics of the system classically. 

\subsubsection{An Example}
In the general form (\ref{type a seesaw}), we may take $G$ to be vanishing and $\eta(x, p) = \cosh(p)\sqrt{x^2 + 1}$ to give the type (a) Seesaw Hamiltonian

\begin{equation}
    H = (y + u(x))\cosh(p_y)\frac{x}{\sqrt{x^2 + 1}} + p_x\sinh(p_y)\sqrt{x^2 + 1}, 
\end{equation}

which generates the force 

\begin{equation}
    \mathbf{F} = \left(\matrix{
        -x \cr
        -y - u(x) - x u'(x)
    }\right). 
\end{equation}

More specifically, one could consider the example 

\begin{equation}
    H = (y + x)\cosh(p_y)\frac{x}{\sqrt{x^2 + 1}} + p_x\sinh(p_y)\sqrt{x^2 + 1}, 
\end{equation}

which generates the force 

\begin{equation}
    \mathbf{F} = \left(\matrix{
        -x \cr
        -y - 2x
    }\right). 
\end{equation}

\subsubsection{Sub-case ii}

This is the case where $a$ vanishes. Solving (\ref{h before characteristics}) now gives that 

\begin{equation}
    \partial_x(h) = 0
\end{equation}

and hence 

\begin{equation}
    \partial_y(g_{ij}) = 0. 
\end{equation} 

Therefore the bijection $\mathbf{v} - \mathbf{p}$ is dependent on $x$ only. 

Property 3 (\ref{g basic 3}) gives that 

\begin{equation}
    \partial_y H = 0. 
\end{equation} 

In other words, $H$ is a function of $x, v_x, v_y$ and hence a function of $x, p_x, p_y$. Thus we must have

\begin{equation}
    H = p_x r(x, p_y) + t(x, p_y). 
\end{equation}

This generates a force whose $x$ component is

\begin{equation}
    F_x = r \partial_x r. 
\end{equation}

As this must vanish by assumption, we have that $r$ is a function of $p_y$ only. Note that as 

\begin{equation}
    v_x = r(p_y), 
\end{equation}

$r$ must be an analytic diffeomorphism. The $y$ component of the force generated is now

\begin{equation}
    F_y = - r'(p_y)\partial_x t(x, p_y) + r(p_y) \partial_{x p_y} t(x, p_y)
\end{equation} 

and its independence of $p_y$ gives

\begin{equation}
    - r''(p_y)\partial_x t(x, p_y) + r(p_y) \partial_{x p_y p_y} t(x, p_y) = 0. 
\end{equation}

Integrating with respect to $x$ gives

\begin{equation}
    - r''(p_y) (t(x, p_y) - t(0, p_y)) + r(p_y) \partial_{p_y p_y} (t(x, p_y) - t(0, p_y)) = 0, 
\end{equation}

which (for each $x$) is a second order \textit{linear} differential equation for $t(x, p_y) - t(0, p_y)$ as a function of $p_y$. We solve for this by noting that 

\begin{equation}
    t(x, p_y) - t(0, p_y) = r(p_y)
\end{equation}

is a solution and performing reduction of order. This gives

\begin{equation}
    t(x, p_y) - t(0, p_y) = r(p_y) \left(c(x) + d(x)\int^{p_y} \frac{dp}{r(p)^2}\right), 
\end{equation}

which produces the desired form of a type (b) Seesaw Hamiltonian if we take 

\begin{equation}
    u(p_y) = t(0, p_y). 
\end{equation} 

\subsubsection{An Example}
We may take $r$ to be the identity to give the type (b) Seesaw Hamiltonian

\begin{equation}
    H = p_xp_y + d(x)p_y + a(p_y) - c(x), 
\end{equation}

which generates the force 

\begin{equation}
    \mathbf{F} = \left(\matrix{
        0 \cr
        c'(x)
    }\right). 
\end{equation}

More specifically, one could consider the example 

\begin{equation}
    H = p_xp_y - \rme^{p_y} + x^2, 
\end{equation} 

which generates the force 

\begin{equation}
    \mathbf{F} = \left(\matrix{
        0 \cr
        - 2x
    }\right). 
\end{equation}

One could see that what we derived in sub-case i, such as (\ref{f after characteristics}), does not hold in this example here, and hence it is really necessary to split into two sub-cases here. 

\subsubsection{The Name `Seesaw'}

There are two aspects of the Seesaw Hamiltonian reminiscent of the Seesaw Mechanism \cite{SeesawMinkowski, SeesawMassMatrix1, SeesawTsutomu, SeesawMassMatrix2} proposed as a potential explanation for small neutrino mass. 

To exploit the full power of the analogy, note that $x$, $y$ are not necessarily to be interpreted as two position coordinates of the same particle; instead we can interpret them as (generalized) positions of two coupled particles. 

\begin{enumerate}
	\item Mass Tensor: 
	The analogy of the mass tensor in this case is $g_{ij}$, which consists of a vanishing diagonal element as in the Seesaw Mechanism. Furthering the analogy, a mass tensor analogous to $g_{ij}$ here would generate a heavy particle near the $x$ direction and a light particle near the $y$ direction \cite{SeesawMassMatrix1, SeesawMassMatrix2}. 
	\item Dynamics: 
	The dynamics in the $x$ direction is completely independent of $y$ position, whilst the dynamics in the $y$ direction is driven by $x$ position, as is expected in the limiting regime of a heavy particle in the $x$ direction coupled to a light particle in the $y$ direction. 
\end{enumerate}

\subsection{Case 4}

If $T_i^j$ is always isotropic, let 

\begin{equation}
    T_i^j = a(x)\delta_i^j. 
\end{equation}

As 

\begin{equation}
    T_i^j = \partial_i (F^j), 
\end{equation} 

we have that 

\begin{equation}
    \partial_{ik} F^j = \partial_k(T_i^j) = \partial_k (a)\delta_i^j
\end{equation}

is symmetric in $i$, $k$. Thus 

\begin{equation}
    \partial_i (a)\delta_k^j = \partial_k (a)\delta_i^j. 
\end{equation}

Contracting $i$ and $j$ gives 

\begin{equation}
    \partial_k (a) = 2\partial_k (a), 
\end{equation}

implying that $a$ is constant. Hence $\mathbf{F} = a \mathbf{x} + \mathbf{b}$ for some $\mathbf{b}$ and we have isotropic simple harmonic motion, which for the sake of the classification of forces (theorem \ref{force thm}) is conservative and thus can be generated by an anisotropic quadratic Hamiltonian. 

\section{Pseudo Conservative Forces} \label{pseudo conservative}

A force $\mathbf{F}$ is conservative if it can be written as 
\begin{equation} \label{conservative}
    \mathbf{F} = -\nabla U
\end{equation}
for a scalar field $U$. However, written in index notation, we have a mismatch between the two sides $F^i$ and $\partial_i U$. This motivates the following definition. 

\begin{df}
    $\mathbf{F}$ is pseudo conservative if $F^i = -M^{ij}\partial_j U$ for a fixed invertible symmetric $n\times n$ matrix $M$. 
\end{df}

From this perspective it may be seen that the conservative nature of many forces is underpinned by fixing a positive definite spatial metric $M^{ij}$. 

Noting that the anisotropic quadratic Hamiltonian
\begin{equation}
    H = \frac{1}{2} M^{ij} p_ip_j + U(\mathbf{x})
\end{equation}
generates the force
\begin{equation}
    F^i = -M^{ij}\partial_j U, 
\end{equation}
pseudo conservative forces are precisely the forces which may be generated by an anisotropic quadratic Hamiltonian. 

Theorem \ref{force thm} is now equivalent to the following corollary. 

\begin{cor}
    In two spatial dimensions (over any open and connected spatial domain), any velocity-independent force generated by a regular Hamiltonian is pseudo conservative. 
\end{cor}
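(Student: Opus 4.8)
The plan is to read off this corollary as essentially a restatement of Theorem~\ref{force thm}, using the identification of pseudo conservative forces established immediately above it. Theorem~\ref{force thm} asserts that in two spatial dimensions any velocity-independent force arising from a regular Hamiltonian can be generated by an anisotropic quadratic Hamiltonian $\tilde{H} = \frac{1}{2} M^{ij} p_i p_j + U(\mathbf{x})$. Since we have already computed that such a Hamiltonian generates precisely $F^i = -M^{ij}\partial_j U$, and since the definition declares a force pseudo conservative exactly when it takes this form for some fixed invertible symmetric $M$, the corollary will follow once the matrix $M$ supplied by Theorem~\ref{force thm} is matched to the matrix demanded by the definition.

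Concretely, I would first invoke Theorem~\ref{force thm} to produce the generating Hamiltonian $\tilde{H}$ together with its matrix $M^{ij}$. Then I would observe that only the symmetric part $M^{(ij)}$ of $M$ enters both the kinetic term $\frac{1}{2}M^{ij}p_ip_j$ (because $p_ip_j$ is symmetric in $i,j$) and hence the resulting force, so one may replace $M$ by $M^{(ij)}$ without altering $\tilde{H}$ or $F$. Because $\tilde{H}$ is regular, its Hessian $g^{ij} = \partial^{ij}\tilde{H} = M^{(ij)}$ is everywhere non-singular, so $M^{(ij)}$ is a fixed invertible symmetric matrix. Taking this as the matrix $M$ in the definition then exhibits $F^i = -M^{(ij)}\partial_j U$, which is exactly the pseudo conservative form, completing the argument.

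There is no genuine analytic obstacle left at this stage: all the substantive work --- the case analysis on the eigenstructure of $T$ and the verification that every resulting force is generated by some anisotropic quadratic Hamiltonian --- has already been carried out in the proof of Theorem~\ref{force thm}. The only point requiring any care, and thus the closest thing to a ``hard part'', is the bookkeeping just described: confirming that passing to the symmetric part of $M$ preserves both invertibility (via regularity of $\tilde{H}$) and the generated force, so that the hypotheses of the definition of pseudo conservative are met verbatim.
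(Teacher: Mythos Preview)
Your proposal is correct and follows essentially the same approach as the paper: the paper simply declares the corollary to be equivalent to Theorem~\ref{force thm} via the observation that anisotropic quadratic Hamiltonians generate exactly the pseudo conservative forces. Your added bookkeeping---passing to the symmetric part $M^{(ij)}$ and noting that regularity of $\tilde{H}$ forces this Hessian to be invertible---is a valid clarification that the paper leaves implicit, but it does not constitute a different route.
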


We similarly extend the concept of a curl-free force. 

\begin{df}
    $\mathbf{F}$ is pseudo curl-free if $F^i = -M^{ij}u_j$ for a fixed invertible symmetric $n\times n$ matrix $M$ and a curl-free vector field $\mathbf{u}$, that is, $\partial_iu_j$ is symmetric. 
\end{df}

Note that pseudo conservative forces are pseudo curl-free, and the converse holds whenever $H^1_{dR}(D) = 0$, such as when $D = \mathbb{R}^n$. The following result in general dimensions follows from corollary \ref{crucial force}. 

\begin{cor}
    In any open and connected spatial domain $D$ in any dimension, any velocity-independent force generated by a regular Hamiltonian is pseudo curl-free. If further we have $H^1_{dR}(D) = 0$, then the force is pseudo conservative. 
\end{cor}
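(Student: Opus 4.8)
The plan is to derive everything directly from Corollary \ref{crucial force}, which has already done the substantive analytic work. First I would fix an arbitrary base point $(\mathbf{x}_0, \mathbf{v}_0) \in D \times \mathbb{R}^n$ and set $c_{ij} = g_{ij}(\mathbf{x}_0, \mathbf{v}_0)$. Since $g^{ij} = \partial^{ij} H$ is a Hessian in momentum, $c_{ij}$ is symmetric, and regularity of $H$ forces $g^{ij}$ (hence $c_{ij}$) to be non-singular; I write $M^{ij} = c^{ij}$ for its inverse, which is again symmetric and invertible. Applying Corollary \ref{crucial force} with $\mathbf{x} = \mathbf{x}_0$, $\mathbf{v} = \mathbf{v}_0$ and $\mathbf{y}$ arbitrary then asserts that $T^k_j(\mathbf{y}) c_{ik}$ is symmetric in $i, j$ for every $\mathbf{y} \in D$. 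The crucial feature is that a \emph{single} fixed matrix $c_{ij}$, independent of $\mathbf{x}$ and $\mathbf{v}$, works simultaneously at all points $\mathbf{y}$.

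Next I would define the covector field $u_j = -c_{jk} F^k$, so that $F^i = -M^{ij} u_j$ holds by construction, using $c^{ij} c_{jk} = \delta^i_k$. Recalling $T^k_i = \partial_i F^k$, differentiation gives $\partial_i u_j = -c_{jk} T^k_i$, and the symmetry supplied by Corollary \ref{crucial force}, namely $c_{ik} T^k_j = c_{jk} T^k_i$, is precisely the statement that $\partial_i u_j$ is symmetric in $i, j$. Hence $\mathbf{u}$ is curl-free and $F^i = -M^{ij} u_j$ exhibits $\mathbf{F}$ as pseudo curl-free, establishing the first claim in any dimension over any open connected $D$.

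For the second claim I would invoke the hypothesis $H^1_{dR}(D) = 0$. Symmetry of $\partial_i u_j$ says exactly that the $1$-form $u_j\, \rmd x^j$ is closed, and on a domain with vanishing first de Rham cohomology every closed $1$-form is exact, so $u_j = \partial_j U$ for some scalar field $U$ on $D$. Substituting yields $F^i = -M^{ij} \partial_j U$, which is the definition of pseudo conservative.

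The main obstacle has, in effect, already been surmounted: all the real content resides in Theorem \ref{crucial} and Corollary \ref{crucial force}, which guarantee a single $\mathbf{x}$- and $\mathbf{v}$-independent symmetric invertible matrix intertwining with $T$ at every point of $D$. The only genuine items left to check are that $c_{ij}$ is symmetric and invertible (immediate from the Hessian structure and regularity, which is also what makes $M$ a legitimate choice so that the definitions are non-vacuous) and the standard topological passage from closed to exact forms under $H^1_{dR}(D) = 0$. I would take care to state explicitly that $\mathbf{u}$ is curl-free \emph{globally} on $D$, since this is what allows the cohomological argument to apply.
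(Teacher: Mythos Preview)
Your proposal is correct and is precisely the argument the paper has in mind: the paper merely states that the result ``follows from corollary \ref{crucial force}'' without spelling out the details, and you have supplied exactly those details, choosing a fixed value $c_{ij}=g_{ij}(\mathbf{x}_0,\mathbf{v}_0)$ and reading off the curl-free condition from the symmetry of $T^k_j c_{ik}$. The only minor stylistic point is that the paper leaves the routine cohomological step (closed $\Rightarrow$ exact under $H^1_{dR}(D)=0$) implicit, whereas you have made it explicit; this is an improvement in exposition rather than a difference in approach.
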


\section{Conclusion}

We developed strong results about regular Hamiltonians generating velocity-independent forces in general dimensions (section \ref{highD}).  The method of working in state space ($\mathbf{x}, \mathbf{v}$) without translating to Lagrangian formalism proved helpful in establishing these results. 

We applied these general results to completely classify regular (cf.~subsection \ref{regularity}) Hamiltonians in one spatial dimension generating velocity-independent forces (section \ref{1D prob sec}, theorem \ref{1D Hamiltonian}), and all such Hamiltonians in two spatial dimensions apart from ones generating an isotropic simple harmonic motion (section \ref{2D prob sec}, theorem \ref{2D Hamiltonian}). All such velocity-independent forces generated by a regular Hamiltonian in two spatial dimensions were also classified (theorem \ref{force thm}), and their relation to conservative forces and pseudo conservatism was explored in section \ref{pseudo conservative}. 

Some further directions arising out of this discussion include: 

\begin{enumerate}
    \item A study of Hamiltonians which generate isotropic simple harmonic motion. 
    \item Extension of this scheme of characterisation to higher and ideally general dimensions. 
    \item A better understanding of the quantisation properties of the new classes of Hamiltonians proposed here. 
\end{enumerate}

\ack

We thank Sir Michael V Berry for his thoughtful insights and helpful suggestions. 

\appendix
\section*{Appendix}
\setcounter{section}{1}

We consider here time-dependent Hamiltonians
\begin{equation}
    H = H(\mathbf{x}, \mathbf{p}, t), 
\end{equation}
generating the velocity-independent force
\begin{equation}
    \mathbf{F} = \mathbf{F}(\mathbf{x}, t), 
\end{equation}
which is now allowed to be time-dependent. The conditions for the velocity-independence of the force now take the form

\begin{equation} \label{timedep multiD fund1}
    \{\partial^i H, \partial^jH\} = 0,
\end{equation}

and

\begin{equation} \label{timedep multiD fund2}
    \partial_t\left(\partial^{ij}H\right) = \{\partial^{ij} H, H\}, 
\end{equation}

generalising (\ref{multiD fund1}), (\ref{multiD fund2}) of the time-independent case. Condition (\ref{timedep multiD fund2}) states that $\partial^{ij}H$ is a conserved constant of motion. Note that condition (\ref{timedep multiD fund1}) is imposed upon each temporal slice $H(\cdot, t)$ of the Hamiltonian, whilst condition (\ref{timedep multiD fund2}) prescribes the time evolution of the Hamiltonian. 

In one spatial dimension, condition (\ref{timedep multiD fund1}) is always satisfied. Given any regular time-independent Hamiltonian $H_0(\mathbf{x}, \mathbf{p})$ and time $t_0$, condition (\ref{timedep multiD fund2}) always yields local solutions near $t = t_0$ with
\begin{equation}
    H(\mathbf{x}, \mathbf{p}, t_0) = H_0(\mathbf{x}, \mathbf{p}). 
\end{equation}
On the other hand, the existence of temporally global solutions may be worthy of further investigation. A global solution to (\ref{timedep multiD fund2}) might not always exist, such as in the case of a temporally-separable Hamiltonian
\begin{equation}
    H(\mathbf{x}, \mathbf{p}, t) = S(t)H_0(\mathbf{x}, \mathbf{p}),
\end{equation}
where the temporal scaling $S(t)$ must satisfy
\begin{equation}
    \dot{S}(t) \propto S(t)^2, 
\end{equation}
and hence diverges at finite $t$ for any time-dependent solution. 

In higher dimensions, condition (\ref{timedep multiD fund1}) becomes significant. The interplay between condition (\ref{timedep multiD fund1}) on slices at fixed times and the time-evolution given by condition (\ref{timedep multiD fund2}) does not yield clear local solutions, and is worthy of further investigation. 

\printbibliography[
title = {References}
]

\end{document}